\newcommand{\diag}{\mathop{\mathrm{diag}}}
\newcommand{\cov}{\mathop{\mathrm{cov}}}
\newcommand{\var}{\mathop{\mathrm{var}}}
\newcommand{\sgn}{\mathop{\mathrm{sgn}}}
\newtheorem{theorem}{Theorem}[section]
\newtheorem{lemma}{Lemma}[section]
\journal{Computational Statistics and Data Analysis}
\begin{document}

\begin{frontmatter}



\title{Weighted quantile regression for longitudinal data}


\author{Xiaoming Lu, Zhaozhi Fan}

\address{Department of Mathematics and Statistics, Memorial University of Newfoundland, St.John's, Newfoundland, Canada}

\begin{abstract}
Quantile regression is a powerful statistical methodology that complements the classical linear regression by examining how covariates influence the location, scale, and shape of the entire response distribution and offering a global view of the statistical landscape. In this paper we propose a new quantile regression model for longitudinal data. The proposed approach incorporates the correlation structure between repeated measures to enhance the efficiency of the inference. In order to use the Newton-Raphson iteration method to obtain convergent estimates, the estimating functions are redefined as smoothed functions which are differentiable with respect to regression parameters. Our proposed method for quantile regression provides consistent estimates with asymptotically normal distributions. Simulation studies are carried out to evaluate the performance of the proposed method. As an illustration, the proposed method was applied to a real-life data that contains self-reported labor pain for women in two groups.
\end{abstract}

\begin{keyword}
Quantile regression \sep Longitudinal data \sep Quasi-likelihood \sep Correlation

\end{keyword}

\end{frontmatter}

\section{Introduction}
\label{sec1}
Longitudinal data are very common in many areas of applied studies. Such data are repeatedly collected from independent subjects over time and correlation arises between measures from the same subject. One advantage of longitudinal study is that, additional to modeling the cohort effect, one can still specify the individual patterns of change. In order to take the correlation into consideration to not only avoid loss of efficiency in estimation but also make correct statistical inference, a number of methods are developed to evaluate covariate effects on the mean of a response variable \citep{LiangZeger1986, QuLindsayLi2000, JungYing2003}. \citet{Sutradhar2003} has proposed a generalization of the quasi-likelihood estimation approach to model the conditional mean of the response by solving the generalized quasi-likelihood (GQL) estimating equations. A general stationary auto-correlation matrix is used in this method, which, in fact, represents the correlations of many stationary dynamic models, such as stationary auto-regressive order 1 (AR(1)), stationary moving average order 1 (MA(1)), and stationary equi-correlation (EQC) models.

Quantile regression \citep{KoenkerBassett1978} has become a widely used technique in applications. The effects of covariates are modeled through conditional quantiles of the response variable, rather than the conditional mean, which makes it possible to characterize any arbitrary point of a distribution and thus provide a complete description of the entire response distribution. Compared to the classical mean regression, quantile regression is more robust to outliers and the error patterns do not need to be specified. Therefore, quantile regression has been widely used, \citep[see][among others]{ChenWeiParzen2004,Koenker2005,ReichBondellWang2010,Farcomeni2012}.

Recently quantile regression has been extended to longitudinal data analysis. A simple way to do so is to assume working independence that ignores correlations between repeated measures, which, of course, may cause loss of efficiency, see \citet{WeiHe2006,WangHe2007,MuWei2009,WangFygenson2009,Wang2009,WangZhuZhou2009}. \citet{Jung1996} firstly developed a quasi-likelihood method for median regression which incorporates correlations between repeated measures for longitudinal data. This method requires estimation of the correlation matrix. Based on Jung's work, \citet{LipsitzFMZ1997} proposed a weighted GEE model. \citet{Koenker2004} considered a random effect model for estimating quantile functions with subject specific fixed effects and based inference on a penalized likelihood. \citet{Karlsson2008} suggested a weighted approach for a nonlinear quantile regression estimation of longitudinal data. \citet{GeraciBottai2007} made inferences by using a random intercept to account for the within-subject correlations and proposed a method using the asymmetric Laplace distribution (ALD). \citeauthor{GeraciBottai2007}'s work was generalized by \citet{LiuBottai2009}, who gave a linear mixed effect quantile regression model using a multivariate Laplace distribution. \citet{Farcomeni2012} proposed a linear quantile regression model allowing time-varying random effects and modeled subject-specific parameters through a latent Markov chain. To reduce the loss of efficiency in inferences of quantile regression, \citet{TangLeng2011} incorporate the within-subject correlations through a specified conditional mean model.

Unlike in the classical linear regression, it is difficult to account for the correlations between repeated measures in quantile regression. Misspecification of the correlation structure in GEE method also leads to loss of inferential efficiency. Moreover, the approximating algorithms for computing estimates could be very complicated, and computational problems could occur when statistical software is applied to do intensive re-samplings in the inference procedure. To overcome these problems, \citet{FuWang2012} proposed a combination of the between- and within-subject estimating equations for parameter estimation. By combining multiple sets of estimating equations, \citet{LengZhang2012} developed a new quantile regression model which produces efficient estimates. Those two papers extend the induced smoothing method \citep{BrownWang2005} to quantile regression, and thus obtained smoothed objective functions which allow the application of Newton-Raphson iteration, and the latter automatically gives both the estimates of parameters and the sandwich estimate of their covariance matrix.

In this paper, we propose a more general quantile regression model by appropriately incorporating a correlation structure between repeated measures in longitudinal data. By employing a general stationary auto-correlation matrix, we avoid the specification of any particular correlation structure. The correlation coefficients can be iteratively estimated in the process of the regression estimation. By using the induced smoothed estimating functions, we can obtain estimates of parameters and their asymptotic covariance matrix by using Newton-Raphson algorithm. The estimators obtained using our proposed method are consistent and asymptotically normal. The results of the intensive simulation studies reveal that our proposed method outperforms those methods based on working independence assumption. Furthermore, our approach is simpler and more general than other quantile regression methods for longitudinal data on theoretical derivation, practical application and statistical programming.

The remainder of this paper proceeds as follows: In Section \ref{sec2} we develop the proposed quantile regression method and the algorithm of parameter estimation. The asymptotic properties of the proposed estimators are discussed in Section \ref{sec3}. Intensive simulation studies were carried out and results are presented in Section \ref{sec4}. Application of our method to the labor pain data is presented in Section \ref{sec5}. The paper is concluded in Section \ref{sec6} with some concluding remarks.

\section{Proposed quantile regression models}
\label{sec2}
Suppose, in a longitudinal setup, we collect a small number of repeated responses along with certain multidimensional covariates from a large number of independent individuals. Let $y_{i1},\dots,y_{ij},\dots,y_{in_{i}}$ denote $n_{i}\geq 2$ repeated measures observed from the $i$th subject, for $i=1,\dots,m$ where $m$ is a positive integer. Let $x_{ij}=(x_{ij1},\dots,x_{ijp})^{T}$ be the $p$-dimensional covariate vector corresponding to $y_{ij}$. Suppose that responses from different individuals are independent and those from the same subject are dependent. Let the conditional $\tau$th quantile of $y_{ij}$ given $x_{ij}$ be
\begin{equation*}
Q_{\tau}(y_{ij}|x_{ij})=x_{ij}^{T}\beta_{\tau}.
\end{equation*}
In quantile regression we are interested in estimating $\beta_{\tau}$ consistently and as efficiently as possible.

If we assume the working independence (WI) between repeated measures of responses among each individual, we can obtain $\hat{\beta}_{WI\tau}$, an estimate of $\beta_{\tau}$, by minimizing the following objective function
\begin{equation}\label{WIabjfn}
S(\beta_{\tau})=\sum_{i=1}^{m}\sum_{j=1}^{n_{i}}\rho_{\tau}(y_{ij}-x_{ij}^{T}\beta_{\tau}),
\end{equation}
where $\rho_{\tau}(u)=u(\tau-I(u\leq 0))$ \citep{KoenkerBassett1978}. Estimating equations can be derived from function (\ref{WIabjfn}) by equating the  differentiation of $S(\beta_{\tau})$ with respect to $\beta_{\tau}$ to 0. That is
\begin{equation}\label{WIeq}
U_{0}(\beta_{\tau})=\frac{\partial S(\beta_{\tau})}{\partial \beta_{\tau}}=\sum_{i=1}^{m}X_{i}^{T}\psi_{\tau}(y_{i}-X_{i}\beta_{\tau})=0,
\end{equation}
where $\psi_{\tau}(u)=\rho_{\tau}'(u)=\tau-I(u<0)$ is a discontinuous function, and $\psi_{\tau}(y_{i}-X_{i}\beta_{\tau})=(\psi_{\tau}(y_{i1}-x_{i1}^{T}\beta_{\tau}),\dots ,\psi_{\tau}(y_{in_{i}}-x_{in_{i}}^{T}\beta_{\tau}))^{T}$ is a $n_{i}\times 1$ vector. An efficient algorithm to obtain an estimate of  $\beta_{\tau}$ by solving the equation (\ref{WIeq}), $U_{0}(\beta_{\tau})=0$, was given by \citet{KoenkerD'Orey1987}, which is available in statistical software R (package quantreg). Parameter estimator $\hat{\beta}_{WI\tau}$ is derived from estimating equation (\ref{WIeq}) under working independence assumption, therefore the efficiency of $\hat{\beta}_{WI\tau}$ may not be satisfactory.

To take the within correlations into consideration when constructing quantile regression models for longitudinal data, a quasi-likelihood (QL) method was introduced by \citet{Jung1996}. Let $\varepsilon_{i}=(\varepsilon_{i1},\dots,\varepsilon_{ij},\dots,\varepsilon_{in_{i}})^{T}$, where $\varepsilon_{ij}=y_{ij}-x_{ij}^{T}\beta_{\tau}$ which is a continuous error term satisfying $P(\varepsilon_{ij}\leq 0)=\tau$ and with an unknown density function $f_{ij}(\cdot)$. In least squares, or mean regression model, Bernoulli distributed $\psi_{\tau}(\varepsilon_{i})=\psi_{\tau}(y_{i}-X_{i}\beta_{\tau})$ can be treated as a random noise vector. Using this fact, the QL can be generalized into quantile regression by estimating the correlation matrix of $\psi_{\tau}(\varepsilon_{i})$. Let the covariance matrix of $\psi_{\tau}(\varepsilon_{i})$ be denoted as
\begin{equation*}
V_{i}=\cov(\psi_{\tau}(y_{i}-X_{i}\beta_{\tau}))=\cov\begin{pmatrix}
\tau-I(\varepsilon_{i1}<0)\\
\vdots\\
\tau-I(\varepsilon_{in_{i}}<0)
\end{pmatrix},
\end{equation*}
and
\begin{equation*}
\varGamma_{i}=\diag[f_{i1}(0),\dots,f_{in_{i}}(0)]=\begin{pmatrix}
f_{i1}(0)& & \\
&\ddots&\\
& & f_{in_{i}}(0)
\end{pmatrix},
\end{equation*}
be an $n_{i}\times n_{i}$ diagonal matrix with $j$th diagonal element $f_{ij}(0)$. \citet{Jung1996} derived the derivative of the log-quasi-likelihood $l(\beta_{\tau};y_{i})$ with respect to $\beta_{\tau}$, which can be used to estimate $\beta_{\tau}$ by solving
\begin{equation}\label{u1eqn}
U_{1}(\beta_{\tau})=\sum_{i=1}^{m}\frac{\partial l(\beta_{\tau};y_{i})}{\partial\beta_{\tau}} =\sum_{i=1}^{m}X_{i}^{T}\varGamma_{i}V_{i}^{-1}\psi_{\tau}(y_{i}-X_{i}\beta_{\tau})=0.
\end{equation}
In estimating equation (\ref{u1eqn}), the term $\varGamma_{i}$ describes the dispersions in $\varepsilon_{ij}$ and its diagonal elements can be well estimated by following \citet{HendricksKoenker1992}:
\begin{equation*}
\hat{f}_{ij}(0)=2h_{n}[x_{ij}^{T}(\hat{\beta}_{\tau+h_{n}}-\hat{\beta}_{\tau-h_{n}})]^{-1},
\end{equation*}
where $h_{n}\rightarrow 0$ when $n\rightarrow \infty$ is a bandwidth parameter. In some cases when $f_{ij}$ is difficult to estimate, $\varGamma_{i}$ can be simply treated as an identity matrix with a slight loss of efficiency \citep{Jung1996}.

However, the estimation of the covariance matrix $V_{i}$ becomes much complicated when QL method is applied. Whatever correlation matrix that $\varepsilon_{i}$ follows, the correlation matrix of $\psi_{\tau}(\varepsilon_{i})$ is no longer the same one, and its correlation structure may be very difficult to specify.

Here, we propose a new method based on the following estimating equations
\begin{equation}\label{gqleeq}
U(\beta_{\tau})=\sum_{i=1}^{m}X_{i}^{T}\varGamma_{i}\varSigma_{i}^{-1}(\rho)\psi_{\tau}(y_{i}-X_{i}\beta_{\tau})=0,
\end{equation}
where $\varSigma_{i}(\rho)$ is the covariance matrix of $\psi_{\tau}(\varepsilon_{i})$ that can be expressed as $\varSigma_{i}(\rho)=A_{i}^{\frac{1}{2}}C_{i}(\rho)A_{i}^{\frac{1}{2}}$, with $A_{i}=\diag[\sigma_{i11},\dots,\sigma_{1n_{i}n_{i}}]$ being an $n_{i}\times n_{i}$ diagonal matrix, $\sigma_{ijj}=\var(\psi_{\tau}(\varepsilon_{ij}))$ and $C_{i}(\rho)$ as the correlation matrix of $\psi_{\tau}(\varepsilon_{i})$, $\rho$ being a correlation index parameter. Suppose the matrix $\varSigma_{i}(\rho)$ in equation (\ref{gqleeq}) has a general stationary auto-correlation structure such that the correlation matrix $C_{i}(\rho)$ takes the form of
\begin{equation*}
C_{i}(\rho)=
\begin{pmatrix}
1 & \rho_{1} & \rho_{2} & \cdots & \rho_{n_{i}-1} \\
\rho_{1} & 1 & \rho_{1} & \cdots & \rho_{n_{i}-2} \\
\vdots  & \vdots  & \vdots & & \vdots  \\
\rho_{n_{i}-1} & \rho_{n_{i}-2} & \rho_{n_{i}-3} & \cdots & 1
\end{pmatrix}
\end{equation*}
for all $i=1,\dots,m$, where $\rho_{\ell}$ can be estimated by
\begin{equation*}
\hat{\rho}_{\ell}=\frac{\sum_{i=1}^{m}\sum_{j=1}^{n_{i}-\ell}\tilde{y}_{ij}\tilde{y}_{i,j+\ell}/m(n_{i}-\ell)}{\sum_{i=1}^{m}\sum_{j=1}^{n_{i}}\tilde{y}_{ij}^{2}/mn_{i}}
\end{equation*}
for $\ell=1,\dots,n_{i}-1$ \citep{SutradharKova2000,Sutradhar2003} with $\tilde{y}_{ij}$ defined as
\begin{equation*}
\tilde{y}_{ij}=\frac{\psi_{\tau}(y_{ij}-x_{ij}^{T}\beta_{\tau})}{\sqrt{\sigma_{ijj}}}.
\end{equation*}
To estimate $\sigma_{ijj}=\var(\psi_{\tau}(y_{ij}-x_{ij}^{T}\beta_{\tau}))$, we apply the fact that $\psi_{\tau}(\varepsilon_{ij})=\psi_{\tau}(y_{ij}-x_{ij}^{T}\beta_{\tau})=\tau-I(y_{ij}<x_{ij}^{T}\beta_{\tau})$. Hence we have
\begin{equation*}
\begin{split}
\sigma_{ijj}&=\var[\psi_{\tau}(\varepsilon_{ij})]=\var[\tau-I(y_{ij}<x_{ij}^{T}\beta_{\tau})]=\var[I(y_{ij}<x_{ij}^{T}\beta_{\tau})]\\
& =\Pr(y_{ij}<x_{ij}^{T}\beta_{\tau})(1-\Pr(y_{ij}<x_{ij}^{T}\beta_{\tau})),
\end{split}
\end{equation*}
where $\Pr(y_{ij}<x_{ij}^{T}\beta_{\tau})$ is the probability of the event $\{y_{ij}<x_{ij}^{T}\beta_{\tau}\}$. If $\beta_{\tau}$ is the true parameter, we know that $x_{ij}^{T}\beta_{\tau}$ is exactly the $\tau$th quantile of the variable $y_{ij}$, hence $\Pr(y_{ij}<x_{ij}^{T}\beta_{\tau})=\tau$, which leads to an estimator of $\sigma_{ijj}$, \(\tilde{\sigma}_{ijj}=\tau(1-\tau)\).
Consequently, $A_{i}$ matrix can be calculated at the true $\beta_{\tau}$ as
\begin{equation}\label{Ai}
\begin{split}
\tilde{A}_{i}&=\diag[\tilde{\sigma}_{i11},\dots,\tilde{\sigma}_{1n_{i}n_{i}}]\\
&=\begin{pmatrix}
\tau(1-\tau)& & \\
 &\ddots& \\
 & & \tau(1-\tau)
\end{pmatrix}_{n_{i}\times n_{i}},
\end{split}
\end{equation}
indicating a constant diagonal matrix for a certain $\tau$. We denote the parameter estimator obtained from this proposed quantile regression (PQR) model as $\hat{\beta}_{PQR\tau}$.

Notice that in the expression $\varSigma_{i}(\rho)=A_{i}^{\frac{1}{2}}C_{i}(\rho)A_{i}^{\frac{1}{2}}$, if we set $A_{i}$ as the one at the true $\beta_{\tau}$ which is given by (\ref{Ai}),  $C_{i}(\rho)$ becomes the only part in $\varSigma_{i}(\rho)$ containing the information about the data and the parameter $\beta_{\tau}$. However in practice, the estimated parameter may never be exactly the true $\beta_{\tau}$. Thus, the elements of the diagonal matrix $A_{i}$ may differ from the constant value $\tau(1-\tau)$. Moreover, we expect the matrix $A_{i}$ to be also related to the parameter estimates, which becomes crucial when we use an iteration method to estimate parameters where the estimates $\hat{\beta}_{\tau}$ need to be updated within each iteration step. In this case, as long as the sample size is large enough, we can estimate the diagonal elements of $A_{i}$ by the following
\begin{equation*}
\begin{split}
\hat{\sigma}_{ijj}&=\Pr(y_{ij}<x_{ij}^{T}\beta_{\tau})(1-\Pr(y_{ij}<x_{ij}^{T}\beta_{\tau}))\\
&=\frac{1}{m}\sum_{i=1}^{m}I(y_{ij}<x_{ij}^{T}\beta_{\tau})(1-\frac{1}{m}\sum_{i=1}^{m}I(y_{ij}<x_{ij}^{T}\beta_{\tau})),
\end{split}
\end{equation*}
for all $j=1,\dots,n_{i}$ and $i=1,\dots,m$. By using $\hat{\sigma}_{ijj}$ to estimate $\varSigma_{i}$, the solution-finding iteration converges faster. The solution of estimating equations (\ref{gqleeq}) leads to an adjusted estimate of $\beta_{\tau}$. We call this method as adjusted quantile regression (AQR).

The difficulty of solving the estimating equation (\ref{gqleeq}) is caused by the non-convex and non-continuous objective function $U(\beta_{\tau})$ which is not differentiable. Though several methods can be applied to estimate $\beta_{\tau}$ from equation (\ref{gqleeq}) without requiring any derivatives and continuity of the estimating function, they may become very complicated and cause a high burden of computation. To overcome these difficulties, the induced smoothing method has been extended to the quantile regression for longitudinal data assuming a working correlation by \citet{FuWang2012}. The smoothing method is asymptotically equivalent to its original counterpart, see Lemma \ref{lemma} below.  Here, let $\tilde{U}(\beta_{\tau})=E_{Z}[U(\beta_{\tau}+\varOmega^{1/2}Z)]$, with expectation taken with respect to $Z$, where $Z\sim N(0,I_{p})$, and $\varOmega$ is updated as an estimate of the covariance matrix of parameter estimators. After some algebraic calculations, a smoothed estimating function $\tilde{U}(\beta_{\tau})$ is obtained as
\begin{equation}\label{smootheeq}
\tilde{U}(\beta_{\tau})=\sum_{i=1}^{m}X_{i}^{T}\varGamma_{i}\Sigma_{i}^{-1}(\rho)\tilde{\psi}_{\tau}(y_{i}-X_{i}\beta_{\tau})
\end{equation}
where \[
\tilde{\psi}_{\tau}= \begin{pmatrix}
\tau-1+\varPhi(\frac{y_{i1}-x_{i1}^{T}\beta_{\tau}}{r_{i1}})\\
\vdots\\
\tau-1+\varPhi(\frac{y_{in_{i}}-x_{in_{i}}^{T}\beta_{\tau}}{r_{in_{i}}})
\end{pmatrix}
\]
and $r_{ij}=\sqrt{x_{ij}^{T}\varOmega x_{ij}}$ for $j=1,\dots,n_{i}$. Thus the differentiation of $\tilde{U}(\beta_{\tau})$ with respect to $\beta_{\tau}$ can be easily calculated, and we can use $\partial\tilde{U}(\beta_{\tau})/\partial\beta_{\tau}$ as an approximation of $\partial U(\beta_{\tau})/\partial\beta_{\tau}$ as
\begin{equation*}
\frac{\partial\tilde{U}(\beta_{\tau})}{\partial\beta_{\tau}}=-\sum_{i=1}^{m}X_{i}^{T}\varGamma_{i}\Sigma_{i}^{-1}(\rho)\tilde{\varLambda}_{i}X_{i},
\end{equation*}
where $\tilde{\varLambda}_{i}$ is an $n_{i}\times n_{i}$ diagonal matrix with the $j$th diagonal element $\phi((y_{ij}-x_{ij}^{T}\beta_{\tau})/r_{ij})/r_{ij}$. Generally, let $\hat{\beta}_{WI\tau}$ be the estimate under the working independence assumption and $I_{p}$ be a identity matrix of size $p$, smoothed estimators of $\beta_{\tau}$ and its covariance matrix $\varOmega$ can be obtained from the following Newton-Raphson iteration:
\begin{enumerate}[Step 1.]
\item Given initial values of $\beta_{\tau}$ and the symmetric positive definite matrix $\varOmega$ as $\tilde{\beta}_{\tau}(0)=\hat{\beta}_{WI\tau}$ and $\tilde{\varOmega}(0)=\frac{1}{m}I_{p}$ respectively.
\item Using $\tilde{\beta}_{\tau}(r)$ and $\tilde{\varOmega}(r)$ given from the $r$th iteration, update $\tilde{\beta}_{\tau}(r+1)$ and $\tilde{\varOmega}(r+1)$ by
\begin{equation*}
\tilde{\beta}_{\tau}(r+1)=\tilde{\beta}_{\tau}(r)+\left[-\frac{\partial\tilde{U}(\beta_{\tau})}{\partial\beta_{\tau}}\right]_{r}^{-1}\times\biggl[\tilde{U}(\beta_{\tau})\biggr]_{r} \quad \text{and}
\end{equation*}
\begin{equation*}\label{Cov}
\tilde{\varOmega}(r+1)=\left[-\frac{\partial\tilde{U}(\beta_{\tau})}{\partial\beta_{\tau}}\right]_{r}^{-1}\times\biggl[\cov(\tilde{U}(\beta_{\tau}))\biggr]_{r}\times\left[-\frac{\partial\tilde{U}(\beta_{\tau})}{\partial\beta_{\tau}}\right]_{r}^{-1},
\end{equation*}
where $[]_{r}$ denotes that the expression between the square brackets is evaluated at $\beta_{\tau}=\tilde{\beta}_{\tau}(r)$ and $\cov(\tilde{U}(\beta_{\tau}))=\sum_{i=1}^{m}X_{i}^{T}\varGamma_{i}\Sigma_{i}^{-1}(\rho)\tilde{\psi}_{\tau}(\varepsilon_{i})\tilde{\psi}_{\tau}^{T}(\varepsilon_{i})\Sigma_{i}^{-1}(\rho)\varGamma_{i}X_{i}$.
\item Repeat step 2 until convergence.
\end{enumerate}
This method provides consistent estimates of $\beta_{\tau}$ and its covariance matrix $\varOmega$. Furthermore, compared with other techniques, our method based on Newton-Raphson algorithm is much faster.

\section{Asymptotic properties}
\label{sec3}

In this section, we derive asymptotic distributions of the proposed estimates obtained from both the original estimating equation (\ref{gqleeq}) and smoothed estimating equation (\ref{smootheeq}).
\begin{theorem}\label{thm1}
Under regularity conditions \ref{itm:A1}-\ref{itm:A5} listed in \ref{Appdix}, the estimator $\hat{\beta}_{\tau}$ based on the original estimating equation (\ref{gqleeq}) is $\sqrt{m}$-consistent and asymptotically normal,
\[
\sqrt{m}(\hat{\beta}_{\tau}-\beta_{\tau})\rightarrow N(0,G^{-1}(\beta_{\tau})V\{G^{-1}(\beta_{\tau})\}^{T}),
\]
where, in the variance-covariance matrix, $G(\beta_{\tau})=\lim_{m\rightarrow \infty}\frac{1}{m}\sum_{i=1}^{m}X_{i}^{T}\varGamma_{i}\Sigma_{i}^{-1}(\rho)\varGamma_{i}X_{i}$ and $V=\lim_{m\rightarrow \infty}\frac{1}{m}\sum_{i=1}^{m}X_{i}^{T}\varGamma_{i}\Sigma_{i}^{-1}(\rho)\cov\{\psi_{\tau}(y_{i}-X_{i}\beta_{\tau})\}\Sigma_{i}^{-1}(\rho)\varGamma_{i}X_{i}$.
\end{theorem}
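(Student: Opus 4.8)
The plan is to treat $\hat{\beta}_{\tau}$ as a $Z$-estimator defined by the non-smooth estimating function $U$ and to follow the classical two-part strategy for such problems: isolate a smooth deterministic drift by taking expectations, and control the remaining stochastic fluctuation by an equicontinuity argument. Write $\bar{U}(\beta)=E\{U(\beta)\}=\sum_{i=1}^{m}X_{i}^{T}\varGamma_{i}\Sigma_{i}^{-1}(\rho)g_{i}(\beta)$, where the $j$th entry of $g_{i}(\beta)$ is $\tau-F_{ij}(x_{ij}^{T}\beta)$ and $F_{ij}$ is the conditional distribution function of $y_{ij}$. Since the true parameter satisfies $\Pr(\varepsilon_{ij}<0)=F_{ij}(x_{ij}^{T}\beta_{\tau})=\tau$, we have $\bar{U}(\beta_{\tau})=0$, so $\beta_{\tau}$ is a zero of the population equation; I would use the regularity conditions A1--A5 to guarantee that it is the unique such zero and that $m^{-1}U$ converges uniformly to $m^{-1}\bar{U}$ on a neighbourhood of $\beta_{\tau}$, which yields consistency of $\hat{\beta}_{\tau}$.

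Next I would assemble the two ingredients of the linearization. First, at the truth the summands $X_{i}^{T}\varGamma_{i}\Sigma_{i}^{-1}(\rho)\psi_{\tau}(\varepsilon_{i})$ are independent across $i$ with mean zero and finite covariance, so a multivariate Lindeberg--Feller central limit theorem gives
\[
\frac{1}{\sqrt{m}}U(\beta_{\tau})\rightarrow N(0,V),
\]
with $V$ exactly the limiting covariance in the statement; the Lindeberg condition follows from the boundedness and design assumptions in A1--A5. Second, because taking the expectation smooths the indicator, $\bar{U}$ is differentiable, and differentiating $\tau-F_{ij}(x_{ij}^{T}\beta)$ produces the density $f_{ij}(0)$, i.e. precisely the diagonal entries of $\varGamma_{i}$. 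Hence
\[
\frac{\partial\bar{U}(\beta)}{\partial\beta}\Big|_{\beta=\beta_{\tau}}=-\sum_{i=1}^{m}X_{i}^{T}\varGamma_{i}\Sigma_{i}^{-1}(\rho)\varGamma_{i}X_{i}=-mG(\beta_{\tau})\{1+o(1)\},
\]
and a first-order Taylor expansion gives $m^{-1}\{\bar{U}(\beta)-\bar{U}(\beta_{\tau})\}=-G(\beta_{\tau})(\beta-\beta_{\tau})+o(\|\beta-\beta_{\tau}\|)$.

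The main obstacle is the step that glues these together: because $\psi_{\tau}$ is discontinuous, $U$ cannot be Taylor expanded directly, so I must show that the centred process is asymptotically linear, namely
\[
\sup_{\|\beta-\beta_{\tau}\|\leq C/\sqrt{m}}\left\|\frac{1}{\sqrt{m}}\{U(\beta)-U(\beta_{\tau})\}-\frac{1}{\sqrt{m}}\{\bar{U}(\beta)-\bar{U}(\beta_{\tau})\}\right\|=o_{p}(1).
\]
This is a \emph{stochastic equicontinuity} statement for a process indexed by $\beta$ whose randomness enters only through the indicators $I(y_{ij}-x_{ij}^{T}\beta<0)$. Since the class of such half-space indicators is a Vapnik--Chervonenkis (hence Donsker) class, a maximal inequality of the Huber--Pollard type, or the near-linearity lemmas of He and Shao for non-smooth $M$-estimators, delivers the required oscillation bound; the bounded weights $\varGamma_{i}\Sigma_{i}^{-1}(\rho)$ and the design conditions keep the envelope integrable. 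I expect this empirical-process control to be the technically heaviest part of the argument.

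Finally I would combine the pieces. Using $U(\hat{\beta}_{\tau})=0$ together with the displayed equicontinuity and Taylor expansions evaluated at $\beta=\hat{\beta}_{\tau}$, and the consistency already established, one obtains
\[
\sqrt{m}(\hat{\beta}_{\tau}-\beta_{\tau})=G^{-1}(\beta_{\tau})\frac{1}{\sqrt{m}}U(\beta_{\tau})+o_{p}(1),
\]
so Slutsky's theorem with the central limit result yields the stated $N(0,G^{-1}(\beta_{\tau})V\{G^{-1}(\beta_{\tau})\}^{T})$ limit. One loose end to address is that $\Sigma_{i}(\rho)$ is evaluated at the estimated $\hat{\rho}$ rather than at $\rho$; because $E\{\psi_{\tau}(\varepsilon_{i})\}=0$ at the truth, the derivative of $U$ with respect to $\rho$ has mean zero, so $m^{-1}\partial U/\partial\rho\rightarrow 0$ and the perturbation $m^{-1}(\partial U/\partial\rho)\sqrt{m}(\hat{\rho}-\rho)$ is $o_{p}(1)$, confirming that consistent estimation of $\rho$ does not affect the first-order asymptotics, with the sandwich form of $V$ already being robust to a misspecified weight.
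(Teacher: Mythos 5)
Your proposal follows essentially the same route as the paper's proof: both define the population version $\bar{U}(\beta)$ with entries $\tau-F_{ij}(x_{ij}^{T}\beta)$, obtain consistency from uniform convergence of $m^{-1}U$ to $m^{-1}\bar{U}$, apply a CLT to $m^{-1/2}U(\beta_{\tau})$, and reduce the non-smoothness to a stochastic-equicontinuity bound (the paper invokes the Lemma of \citet{Jung1996} over an $m^{-1/3}$ neighbourhood where you invoke VC/Donsker-type maximal inequalities, which is the same idea) before assembling the sandwich limit. Your closing remark on the negligibility of estimating $\rho$ goes slightly beyond what the paper's proof addresses, but does not change the argument.
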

\begin{lemma}\label{lemma}
Under regularity conditions \ref{itm:A1}-\ref{itm:A5} listed in \ref{Appdix}, the smoothed estimating functions $\tilde{U}(\beta_{\tau})$ are asymptotically equivalent to the original estimating functions $U(\beta_{\tau})$ in the sense that,
\[
\frac{1}{\sqrt{m}}\{\tilde{U}(\beta_{\tau})-U(\beta_{\tau})\}=o_{p}(1).
\]
\end{lemma}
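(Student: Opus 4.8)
The plan is to prove the claim by establishing $L^{2}$ convergence, which immediately yields the stated $o_p(1)$. First I would reduce the problem to the componentwise discrepancy between the smoothed and unsmoothed scores. Writing $\varepsilon_{ij}=y_{ij}-x_{ij}^{T}\beta_{\tau}$ and subtracting, the $j$th entry of $\tilde{\psi}_{\tau}(y_{i}-X_{i}\beta_{\tau})-\psi_{\tau}(y_{i}-X_{i}\beta_{\tau})$ equals $\Delta_{ij}:=\Phi(\varepsilon_{ij}/r_{ij})-I(\varepsilon_{ij}\ge 0)$, so that $\frac{1}{\sqrt{m}}\{\tilde{U}(\beta_{\tau})-U(\beta_{\tau})\}=\frac{1}{\sqrt{m}}\sum_{i=1}^{m}X_{i}^{T}\varGamma_{i}\Sigma_{i}^{-1}(\rho)\Delta_{i}$ with $\Delta_{i}=(\Delta_{i1},\dots,\Delta_{in_{i}})^{T}$. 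The governing scale is the bandwidth $r_{ij}=\sqrt{x_{ij}^{T}\varOmega x_{ij}}$: since $\varOmega$ estimates the $O(m^{-1})$ covariance of $\hat{\beta}_{\tau}$, under the regularity conditions $m\varOmega$ is bounded and hence $r_{ij}=O(m^{-1/2})$. Because the subjects are independent and the weight matrices $X_{i}^{T}\varGamma_{i}\Sigma_{i}^{-1}(\rho)$ are uniformly bounded, it then suffices to control the mean and the variance of each $\Delta_{ij}$.

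The key observation for the bias is that $g(e):=\Phi(e/r_{ij})-I(e\ge 0)$ is odd in $e$ (away from the null point $e=0$), using $\Phi(-x)=1-\Phi(x)$. Hence $E[\Delta_{ij}]=\int g(e)f_{ij}(e)\,de=\int_{0}^{\infty}g(e)\{f_{ij}(e)-f_{ij}(-e)\}\,de$, and a one-term Taylor expansion $f_{ij}(e)-f_{ij}(-e)=2f_{ij}'(0)e+o(e)$ together with the substitution $e=r_{ij}t$ gives $E[\Delta_{ij}]=2f_{ij}'(0)r_{ij}^{2}\int_{0}^{\infty}\{\Phi(t)-1\}t\,dt\,(1+o(1))=O(r_{ij}^{2})=O(m^{-1})$. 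Consequently the bias term $\frac{1}{\sqrt{m}}\sum_{i}X_{i}^{T}\varGamma_{i}\Sigma_{i}^{-1}(\rho)E[\Delta_{i}]$ is a sum of $m$ vectors each of order $m^{-1}$, scaled by $m^{-1/2}$, and is therefore $O(m^{-1/2})=o(1)$.

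For the variance I would use the same change of variables: $E[\Delta_{ij}^{2}]=\int g(e)^{2}f_{ij}(e)\,de=r_{ij}\int\{\Phi(t)-I(t\ge 0)\}^{2}f_{ij}(r_{ij}t)\,dt=O(r_{ij})=O(m^{-1/2})$, the integral being finite because $\Phi(t)-I(t\ge 0)$ is square-integrable on $\mathbb{R}$. By independence across $i$ and boundedness of the weights, every entry of $\var\{X_{i}^{T}\varGamma_{i}\Sigma_{i}^{-1}(\rho)\Delta_{i}\}$ is $O(m^{-1/2})$, whence $\var\{\frac{1}{\sqrt{m}}(\tilde{U}-U)\}=\frac{1}{m}\sum_{i}O(m^{-1/2})=O(m^{-1/2})\to 0$. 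Combining the two bounds yields $E\|\frac{1}{\sqrt{m}}(\tilde{U}-U)\|^{2}=\|\text{bias}\|^{2}+\operatorname{tr}(\text{variance})\to 0$, i.e. $L^{2}$ convergence, and hence the asserted $o_p(1)$.

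The main obstacle, and the step that really drives the argument, is establishing that the per-subject bias is genuinely second order, $O(r_{ij}^{2})$, rather than merely $O(r_{ij})$. A crude bound $E[\Delta_{ij}]=O(r_{ij})=O(m^{-1/2})$ would make the summed, $m^{-1/2}$-scaled bias $O(1)$ and the lemma would fail; it is precisely the oddness of $g$ about the quantile, combined with the differentiability of the error densities $f_{ij}$ at the origin, that upgrades the bias by one power of $r_{ij}$ and delivers the result. I would therefore verify that the regularity conditions in \ref{Appdix} supply exactly what this step consumes: bounded covariates, uniformly bounded weight matrices $\varGamma_{i}$ and $\Sigma_{i}^{-1}(\rho)$, the $O(m^{-1})$ order of $\varOmega$, and the continuous differentiability of each $f_{ij}$ in a neighborhood of $0$.
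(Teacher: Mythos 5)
Your proposal is correct and follows essentially the same route as the paper's proof: both split $\frac{1}{\sqrt{m}}\{\tilde{U}(\beta_{\tau})-U(\beta_{\tau})\}$ into a bias and a variance part, both exploit the antisymmetry of the smoothing discrepancy $\Phi(\varepsilon/r_{ij})-I(\varepsilon\geq 0)$ about the quantile together with a Taylor expansion of $f_{ij}$ to upgrade the per-observation bias from $O(r_{ij})$ to $O(r_{ij}^{2})=O(m^{-1})$, and both control the second moment via the substitution $\varepsilon=r_{ij}t$ and the integrability of the squared Gaussian tail (the paper uses a truncation at $|t|=m^{1/3}$ where you invoke square-integrability directly, a cosmetic difference). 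You also correctly identify the crux --- that a first-order bias bound would not suffice --- so no gap to report.
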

\begin{theorem}\label{thm2}
Under regularity conditions \ref{itm:A1}-\ref{itm:A5} listed in \ref{Appdix}, the estimator $\tilde{\beta}_{\tau}$ based on the smoothed estimating equation (\ref{gqleeq}) is $\sqrt{m}$-consistent and asymptotically normal,
\[
\sqrt{m}(\tilde{\beta}_{\tau}-\beta_{\tau})\rightarrow N(0,G^{-1}(\beta_{\tau})V\{G^{-1}(\beta_{\tau})\}^{T}),
\]
where $G(\beta_{\tau})$ and $V$ have the same expressions as in Theorem \ref{thm1}.
\end{theorem}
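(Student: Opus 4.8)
The plan is to transfer the limiting distribution of Theorem \ref{thm1} to the smoothed estimator by exploiting the asymptotic equivalence recorded in Lemma \ref{lemma}, rather than re-deriving the central limit behaviour from scratch. Because $\tilde{U}(\beta_{\tau})$ is, by construction, differentiable in $\beta_{\tau}$, the argument can follow the classical route for smooth estimating equations: first establish consistency of the root $\tilde{\beta}_{\tau}$, then perform a genuine first-order Taylor expansion of $\tilde{U}$ about the true $\beta_{\tau}$, identify the probability limit of the normalized derivative and the weak limit of the normalized score, and finally combine them via Slutsky's theorem.

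First I would establish consistency of $\tilde{\beta}_{\tau}$. Writing $\tilde{U}(\beta_{\tau})=U(\beta_{\tau})+\{\tilde{U}(\beta_{\tau})-U(\beta_{\tau})\}$ and invoking Lemma \ref{lemma}, the normalized smoothed score $m^{-1/2}\tilde{U}(\beta_{\tau})$ differs from $m^{-1/2}U(\beta_{\tau})$ by $o_{p}(1)$; hence under conditions \ref{itm:A1}--\ref{itm:A5} the map $\beta_{\tau}\mapsto m^{-1}\tilde{U}(\beta_{\tau})$ converges to the same deterministic limit that governs the original equation, whose unique zero is the true $\beta_{\tau}$. Combining identifiability with the non-singularity of $G(\beta_{\tau})$ then yields existence and convergence in probability of $\tilde{\beta}_{\tau}$ to $\beta_{\tau}$.

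Next, since $\tilde{U}$ is smooth, I would expand
\begin{equation*}
0=\tilde{U}(\tilde{\beta}_{\tau})=\tilde{U}(\beta_{\tau})+\frac{\partial\tilde{U}(\beta_{\tau})}{\partial\beta_{\tau}}(\tilde{\beta}_{\tau}-\beta_{\tau})+R_{m},
\end{equation*}
so that, after rearranging and using $\tilde{\beta}_{\tau}-\beta_{\tau}=O_{p}(m^{-1/2})$ to absorb the remainder,
\begin{equation*}
\sqrt{m}(\tilde{\beta}_{\tau}-\beta_{\tau})=\left[-\frac{1}{m}\frac{\partial\tilde{U}(\beta_{\tau})}{\partial\beta_{\tau}}\right]^{-1}\frac{1}{\sqrt{m}}\tilde{U}(\beta_{\tau})+o_{p}(1).
\end{equation*}
Two limits then close the argument. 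For the score, Lemma \ref{lemma} gives $m^{-1/2}\tilde{U}(\beta_{\tau})=m^{-1/2}U(\beta_{\tau})+o_{p}(1)$, and the subject-wise independent, mean-zero summands inherit the central limit theorem already established for $m^{-1/2}U(\beta_{\tau})$ in Theorem \ref{thm1}, so $m^{-1/2}\tilde{U}(\beta_{\tau})\to N(0,V)$. For the slope, I would show $-m^{-1}\partial\tilde{U}(\beta_{\tau})/\partial\beta_{\tau}=m^{-1}\sum_{i=1}^{m}X_{i}^{T}\varGamma_{i}\Sigma_{i}^{-1}(\rho)\tilde{\varLambda}_{i}X_{i}\to G(\beta_{\tau})$, the point being that the diagonal entries $\phi((y_{ij}-x_{ij}^{T}\beta_{\tau})/r_{ij})/r_{ij}$ of $\tilde{\varLambda}_{i}$ act as a kernel of bandwidth $r_{ij}$, whose expectation converges to the error density $f_{ij}(0)$, i.e. to the corresponding diagonal entry of $\varGamma_{i}$. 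Slutsky's theorem then delivers $\sqrt{m}(\tilde{\beta}_{\tau}-\beta_{\tau})\to G^{-1}(\beta_{\tau})V\{G^{-1}(\beta_{\tau})\}^{T}$, matching Theorem \ref{thm1} verbatim.

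The main obstacle is the slope limit, because the bandwidth $r_{ij}=\sqrt{x_{ij}^{T}\varOmega x_{ij}}$ shrinks (with $\varOmega=O(m^{-1})$, so $r_{ij}=O(m^{-1/2})$) while the kernel term $\phi(\cdot/r_{ij})/r_{ij}$ diverges pointwise as $r_{ij}\to 0$; establishing $E[\tilde{\varLambda}_{i}]\to\varGamma_{i}$ together with a uniform law-of-large-numbers control of $\tilde{\varLambda}_{i}$ and of the remainder $R_{m}$ over a shrinking neighbourhood of $\beta_{\tau}$ is what requires the smoothness and boundedness of the densities $f_{ij}$ supplied by conditions \ref{itm:A1}--\ref{itm:A5}. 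Once the smoothing bias is shown to vanish at the right rate and its contribution to the score variance is seen to be negligible---precisely what Lemma \ref{lemma} encodes---the smoothed estimator inherits the asymptotic covariance of the original one, so no separate variance calculation is needed.
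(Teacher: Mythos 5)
Your proposal is correct and follows essentially the same route as the paper: consistency via the asymptotic equivalence of $\tilde{U}$ and $U$ (hence of $\tilde{U}$ and the deterministic $\bar{U}$ whose unique zero is $\beta_{\tau}$), a first-order Taylor expansion of the smooth $\tilde{U}$, the score limit transferred from Theorem \ref{thm1} via Lemma \ref{lemma}, and the slope limit obtained by showing the kernel terms $\phi(\varepsilon_{ij}/r_{ij})/r_{ij}$ have expectation converging to $f_{ij}(0)$ so that $-m^{-1}\partial\tilde{U}(\beta_{\tau})/\partial\beta_{\tau}\to G(\beta_{\tau})$. The only cosmetic difference is that the paper writes the expansion in mean-value form with the derivative evaluated at an intermediate point $\hat{\beta}_{\tau}^{*}$ rather than carrying an explicit remainder $R_{m}$.
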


 Note that, Lemma \ref{lemma} indicates the asymptotic equivalence of the smoothed estimating functions and their original counterpart. Theorems \ref{thm1} and \ref{thm2} illustrate the asymptotic equivalence of the two corresponding estimators. From Theorem \ref{thm2}, we can obtain a natural sandwich form estimator of the variance-covariance matrix of $\sqrt{m}(\tilde{\beta}_{\tau}-\beta_{\tau})$ as
\begin{equation}\label{???}
\widehat{\cov}(\sqrt{m}(\tilde{\beta}_{\tau}-\beta_{\tau}))=\hat{G}^{-1}(\tilde{\beta}_{\tau})\tilde{V}\{\hat{G}^{-1}(\tilde{\beta}_{\tau})\}^{T},
\end{equation}
where in the covariance matrix, we have $\tilde{G}(\tilde{\beta}_{\tau})=\frac{1}{m}\sum_{i=1}^{m}X_{i}^{T}\varGamma_{i}\Sigma_{i}^{-1}(\rho)\varGamma_{i}X_{i}$ and $\tilde{V}=\frac{1}{m}\sum_{i=1}^{m}X_{i}^{T}\varGamma_{i}\Sigma_{i}^{-1}(\rho)\cov\{\tilde{\psi}_{\tau}(y_{i}-X_{i}\tilde{\beta}_{\tau})\}\Sigma_{i}^{-1}(\rho)\varGamma_{i}X_{i}$. Based on this formula, we update matrix $\tilde{\varOmega}$ in the Newton-Raphson iteration on page \pageref{Cov}.

Proofs are deferred to the \ref{Appdix}.

\section{Simulation studies}
\label{sec4}

\begin{table}[t!]
\caption[Simulation Results with Normal Errors]{Biases and relative efficiencies to the estimators of $\beta_{0}$, $\beta_{1}$ and $\beta_{2}$ using different methods at quantiles 0.25, 0.5, 0.95 are reported for Case 1 when $\rho=0.1$, 0.5, 0.9.}
\label{NorEr}
\begin{tabular*}{\textwidth}{@{\extracolsep{\fill}}llrrcrcrcrcrcr@{}}
\toprule
& & & \multicolumn{3}{c}{$\beta_{0}$} & &\multicolumn{3}{c}{$\beta_{1}$} & & \multicolumn{3}{c}{$\beta_{2}$}\\
\cmidrule{4-6}\cmidrule{8-10}\cmidrule{12-14}
$\tau$ & $\rho$ & Method & Bias & & EFF & & Bias & & EFF & & Bias & & EFF \\ \midrule
0.25 & 0.1 & AQR &  0.0019 && 1.046 &&  0.0015 && 1.041 && -0.0001 && 1.069   \\
         &  & PQR & -0.0005 && 1.040 &&  0.0014 && 1.044 && -0.0001 && 1.069   \\
          &  & WI &  0.0000 && 1.000 &&  0.0020 && 1.000 && -0.0002 && 1.000   \\
     & 0.5 & AQR &  0.0043 && 1.098 && -0.0017 && 1.191 &&  0.0015 && 1.236   \\
         &  & PQR &  0.0017 && 1.111 && -0.0017 && 1.194 &&  0.0015 && 1.242   \\
          &  & WI &  0.0025 && 1.000 && -0.0019 && 1.000 &&  0.0016 && 1.000   \\
     & 0.9 & AQR &  0.0048 && 1.186 && -0.0006 && 2.811 &&  0.0000 && 2.707   \\
         &  & PQR &  0.0013 && 1.195 && -0.0008 && 2.816 &&  0.0000 && 2.706   \\
          &  & WI &  0.0028 && 1.000 && -0.0023 && 1.000 &&  0.0006 && 1.000   \\
 0.5 & 0.1 & AQR &  0.0022 && 1.050 && -0.0020 && 1.054 &&  0.0006 && 1.049   \\
         &  & PQR &  0.0022 && 1.050 && -0.0020 && 1.053 &&  0.0006 && 1.049   \\
          &  & WI &  0.0019 && 1.000 && -0.0017 && 1.000 &&  0.0008 && 1.000   \\
     & 0.5 & AQR &  0.0002 && 1.059 &&  0.0018 && 1.260 &&  0.0010 && 1.247   \\
         &  & PQR &  0.0002 && 1.059 &&  0.0018 && 1.260 &&  0.0010 && 1.247   \\
          &  & WI & -0.0000 && 1.000 &&  0.0025 && 1.000 &&  0.0006 && 1.000   \\
     & 0.9 & AQR & -0.0003 && 1.256 && -0.0005 && 3.135 &&  0.0001 && 3.026   \\
         &  & PQR & -0.0003 && 1.256 && -0.0005 && 3.136 &&  0.0001 && 3.026   \\
          &  & WI & -0.0004 && 1.000 && -0.0013 && 1.000 &&  0.0015 && 1.000   \\
0.95 & 0.1 & AQR & -0.0094 && 1.066 &&  0.0047 && 1.069 &&  0.0029 && 1.099   \\
         &  & PQR &  0.0033 && 1.092 &&  0.0046 && 1.071 &&  0.0028 && 1.118   \\
          &  & WI & -0.0004 && 1.000 &&  0.0032 && 1.000 &&  0.0031 && 1.000  \\
     & 0.5 & AQR & -0.0142 && 0.988 && -0.0008 && 1.136 && -0.0018 && 1.257   \\
         &  & PQR &  0.0001 && 1.092 && -0.0016 && 1.144 && -0.0015 && 1.248   \\
          &  & WI & -0.0039 && 1.000 && -0.0039 && 1.000 && -0.0017 && 1.000   \\
     & 0.9 & AQR & -0.0136 && 1.212 && -0.0003 && 2.152 && -0.0015 && 2.187   \\
         &  & PQR &  0.0037 && 1.244 && -0.0001 && 2.155 && -0.0017 && 2.129   \\
          &  & WI & -0.0021 && 1.000 &&  0.0052 && 1.000 && -0.0002 && 1.000   \\
\bottomrule
\end{tabular*}
\end{table}

In order to examine the small sample performance of the proposed method, we conducted extensive simulation studies. A part of the simulation results are reported in this section.

The random samples are generated from the model
\begin{equation}
y_{ij}=\beta_{0}+x_{ij1}\beta_{1}+x_{ij2}\beta_{2}+\varepsilon_{ij}
\end{equation}
for $i=1,\dots,m$ and $j=1,\dots,n_{i}$, where $x_{ij1}$ are sampled from the Bernoulli distribution with probability $0.5$, $Bernoulli(0.5)$, and $x_{ij2}$ are generated from a standard normal distribution. In this simulation study, we set the sample size $m=500$ and a balanced design $n_{i}=4$ for all $i=1,\dots,500$. Let the variance-covariance matrix of $\varepsilon_{i}$ with an AR(1) structure be expressed as
\[
\varSigma_{\varepsilon}(\rho)=\begin{pmatrix}
1 & \rho & \rho^2 & \dots & \rho^{n_{i}-1}\\
\rho & 1 & \rho & \dots & \rho^{n_{i}-2}\\
 &  &  & \vdots & \\
\rho^{n_{i}-1} & \rho^{n_{i}-2} &  & \dots & 1\\
\end{pmatrix},
\]
where $\rho$ is set to be $0.1$, $0.5$, or $0.9$ respectively, to generate errors with low, medium and high correlation. Three different distributions are considered for the random error $\varepsilon_{i}$:
\begin{enumerate}[\textit{Case} 1.]
\item Normal distribution, assume that $\varepsilon_{i}$ follows a multivariate normal distribution with mean $-q_{\tau}$, or the $ \tau$th quantile of $0$ and covariance $\varSigma_{\varepsilon}(\rho)$, $N_{p}(-q_{\tau},\varSigma_{\varepsilon}(\rho))$, where $q_{\tau}$ is the $\tau$th quantile of the standard normal distribution.
\item Chi-squared distribution, assume that $ \varepsilon_{i}= \varepsilon_{i}'-q_{\tau}$, where $ \varepsilon_{i}'$ follows a multivariate Chi-squared distribution with two degrees of freedom $(\chi_{2}^2)$, where $q_{\tau}$ is the $\tau$th quantile of the $\chi_{2}^2$ distribution.
\item Student's T distribution, suppose that   $ \varepsilon_{i}= \varepsilon_{i}'-q_{\tau}$, where $ \varepsilon_{i}'$  follows a multivariate T distribution with three degrees of freedom $(T_{3})$, where $q_{\tau}$ is the $\tau$th quantile of the $T_{3}$ distribution.
\end{enumerate}
The values of parameters used in the simulation are $\beta_{0}=-0.5$, $\beta_{1}=0.5$ and $\beta_{2}=1$. Quantiles of $\tau=0.25$, 0.5 and 0.95 are chosen to study the performance of the quantile regression estimators for the response distribution.

The results of $ 1,000$ simulation runs of quantile regression using different estimation methods are analyzed. We report the average bias ($Bias$) and relative efficiency ($EFF$) of the estimates of $\beta_{0}$, $\beta_{1}$ and $\beta_{2}$ using different quantile regression methods (quantile regression method assuming working independence (WI), proposed quantile regression method (PQR), and adjusted quantile regression method (AQR)) in the attached Tables. For each estimator, we use SD to denote the standard deviation of $1000$ parameter estimates, SE the average of $1000$ estimated standard errors. For our proposed estimators, $P_{0.95}$ denotes the percentage of simulation runs when the true parameter falls into the $95\% $ confidence intervals constructed based on the sandwich estimate of the covariance matrix of $\hat{\beta}_{\tau}$, at quantiles $0.25, 0.5, 0.95$. Where $\rho$ is specified as $ 0.1, 0.5$, and $0.9$ respectively.

\begin{table}[t!]
\caption{Simulation Results with Normal Errors (\textit{case} 1).}
\label{SDNor}
\begin{tabular*}{\columnwidth}{@{\extracolsep{\fill}}llrrrrrrrrrr@{}}
\toprule
& & & \multicolumn{3}{c}{$\beta_{0}$} & \multicolumn{3}{c}{$\beta_{1}$} & \multicolumn{3}{c}{$\beta_{2}$}\\
\cmidrule{4-6}\cmidrule{7-9}\cmidrule{10-12}
$\tau$ & $\rho$ & Method & SD & SE & $P_{0.95}$ &  SD & SE & $P_{0.95}$ &  SD & SE & $P_{0.95}$ \\ \midrule
0.25 & 0.1 & AQR & 0.043 & 0.042 & 0.951 & 0.060 & 0.060 & 0.949 & 0.029 & 0.030 &  0.950  \\
         &  & PQR & 0.043 & 0.042 & 0.953 & 0.060 & 0.059 & 0.951 & 0.029 & 0.030 &  0.949  \\
     & 0.5 & AQR & 0.047 & 0.046 & 0.947 & 0.055 & 0.055 & 0.947 & 0.028 & 0.027 &  0.952  \\
         &  & PQR & 0.047 & 0.046 & 0.950 & 0.054 & 0.055 & 0.946 & 0.027 & 0.028 &  0.952  \\
     & 0.9 & AQR & 0.054 & 0.053 & 0.944 & 0.038 & 0.036 & 0.939 & 0.019 & 0.018 &  0.949  \\
         &  & PQR & 0.054 & 0.053 & 0.945 & 0.038 & 0.036 & 0.939 & 0.019 & 0.018 &  0.947  \\
0.5  & 0.1 & AQR & 0.040 & 0.040 & 0.943 & 0.055 & 0.055 & 0.945 & 0.027 & 0.027 &  0.956  \\
         &  & PQR & 0.040 & 0.040 & 0.943 & 0.055 & 0.055 & 0.945 & 0.027 & 0.027 &  0.956  \\
     & 0.5 & AQR & 0.044 & 0.043 & 0.948 & 0.050 & 0.049 & 0.949 & 0.025 & 0.024 &  0.955  \\
         &  & PQR & 0.044 & 0.043 & 0.948 & 0.050 & 0.049 & 0.949 & 0.025 & 0.024 &  0.955  \\
     & 0.9 & AQR & 0.045 & 0.050 & 0.947 & 0.032 & 0.033 & 0.948 & 0.017 & 0.016 &  0.950  \\
         &  & PQR & 0.045 & 0.050 & 0.947 & 0.032 & 0.033 & 0.948 & 0.017 & 0.016 &  0.950  \\
0.95 & 0.1 & AQR & 0.066 & 0.062 & 0.926 & 0.090 & 0.088 & 0.945 & 0.046 & 0.043 &  0.934  \\
         &  & PQR & 0.066 & 0.063 & 0.933 & 0.090 & 0.089 & 0.944 & 0.045 & 0.044 &  0.939  \\
     & 0.5 & AQR & 0.070 & 0.065 & 0.924 & 0.090 & 0.084 & 0.923 & 0.044 & 0.041 &  0.935  \\
         &  & PQR & 0.068 & 0.066 & 0.939 & 0.089 & 0.085 & 0.931 & 0.044 & 0.042 &  0.943  \\
     & 0.9 & AQR & 0.080 & 0.079 & 0.926 & 0.063 & 0.058 & 0.919 & 0.031 & 0.028 &  0.928  \\
         &  & PQR & 0.080 & 0.079 & 0.920 & 0.063 & 0.059 & 0.925 & 0.031 & 0.028 &  0.929  \\
\bottomrule
\end{tabular*}
\end{table}

Table \ref{NorEr} shows the results when $\varepsilon_{i}$ follows a multivariate normal distribution (\textit{Case} 1) with an AR(1) correlation structure where the value of $\rho$ is specified as 0.1, 0.5 and 0.9 respectively. As we can see, when the correlation is low ($\rho=0.1$), the average biases and relative efficiencies of quantile regression estimators $\hat{\beta}_{PQR\tau}$ and $\hat{\beta}_{AQR\tau}$ are comparable, and these two estimators perform slightly better than the quantile regression estimator assuming working independence ($\hat{\beta}_{WI\tau}$). When the correlation is higher ($\rho=0.5$, or $\rho=0.9$), the proposed estimators $\hat{\beta}_{PQR\tau}$ and $\hat{\beta}_{AQR\tau}$ are equally efficient with small biases and much smaller variances than the working independence estimator. Moreover, the estimators $\hat{\beta}_{PQR\tau}$ and $\hat{\beta}_{AQR\tau}$ become more efficient as the correlation ($\rho$) increases. In general, these two proposed methods provide more efficient estimates of $\beta_{1\tau}$ and $\beta_{2\tau}$ than the intercept parameter $\beta_{0\tau}$. Similar performances are observed when $\varepsilon_{i}$ is  $\chi_{2}^2$ (\textit{case} 2) or  $T_{3}$ (\textit{case} 3) distributed except that the proposed estimators are more efficient at higher quantiles when the random effect follows a $\chi_{2}^2$ distribution (\textit{case} 2). The results are not reported.

 Another observation was made to compare the sample standard deviation ($SD$) and the average asymptotic standard errors ($SE$) of the proposed and adjusted estimators when $\varepsilon_{i}$ is normally distributed. In Table \ref{SDNor} we can see that each value of $SD$ is very small and close to the corresponding $SE$ value, which means our estimators perform very well and the estimate of the standard deviation of $\hat{\beta}_{\tau}$ works very well also. Furthermore, the percentages of simulation runs ($P_{0.95}$) when the true parameters fall into the 95\% confidence intervals are all very close to their nominal level, evidencing the asymptotic normality of the estimators. Hence inferences based on it are reliable. The results for Cases 2 and 3 are similar.

Simulation results comparing the linear mixed effects model (LME) and the proposed median regression models are reported in Table \ref{LmeMd}. Biases($Bias$) and relative efficiencies($EFF$) to each estimator are reported for three different error distributions(\textit{case} 1, 2 and 3). 
\begin{table}[t!]
\caption{Simulation Results of Linear Mixed Effect Model and Median Regression Models} \label{LmeMd}
\begin{tabular*}{\textwidth}{@{\extracolsep{\fill}}llrrcrcrcrcrcr@{}}
\toprule
& & & \multicolumn{3}{c}{$\beta_{0}$} & &\multicolumn{3}{c}{$\beta_{1}$} & & \multicolumn{3}{c}{$\beta_{2}$}\\
\cmidrule{4-6}\cmidrule{8-10}\cmidrule{12-14}
Err & $\rho$ & Method & Bias & & EFF & & Bias & & EFF & & Bias & & EFF \\ \midrule
 Nor & 0.1 &  LME &  0.0015 && 1.542 && -0.0020 && 1.546 &&  0.0004 && 1.543   \\
         &  & PQR &  0.0022 && 1.050 && -0.0020 && 1.053 &&  0.0006 && 1.049   \\
          &  & WI &  0.0019 && 1.000 && -0.0017 && 1.000 &&  0.0008 && 1.000   \\
     & 0.5 &  LME &  0.0008 && 1.517 &&  0.0022 && 2.088 &&  0.0006 && 2.048   \\
         &  & PQR &  0.0002 && 1.059 &&  0.0018 && 1.260 &&  0.0010 && 1.247   \\
          &  & WI & -0.0000 && 1.000 &&  0.0025 && 1.000 &&  0.0006 && 1.000   \\
     & 0.9 &  LME & -0.0021 && 1.729 &&  0.0007 && 7.979 &&  0.0001 && 7.605   \\
         &  & PQR & -0.0003 && 1.256 && -0.0005 && 3.136 &&  0.0001 && 3.026   \\
          &  & WI & -0.0004 && 1.000 && -0.0013 && 1.000 &&  0.0015 && 1.000   \\
 Chi & 0.1 &  LME &  0.6193 && 0.011 && -0.0024 && 1.008 &&  0.0005 && 1.002   \\
         &  & PQR &  0.0084 && 1.030 && -0.0059 && 1.037 &&  0.0015 && 1.042   \\
          &  & WI &  0.0069 && 1.000 && -0.0060 && 1.000 &&  0.0018 && 1.000   \\
     & 0.5 &  LME &  0.6159 && 0.011 &&  0.0014 && 1.148 && -0.0005 && 1.020   \\
         &  & PQR &  0.0064 && 1.033 &&  0.0008 && 1.061 && -0.0004 && 1.084   \\
          &  & WI &  0.0051 && 1.000 &&  0.0006 && 1.000 &&  0.0001 && 1.000   \\
     & 0.9 &  LME &  0.6149 && 0.018 &&  0.0003 && 2.802 && -0.0002 && 2.406   \\
         &  & PQR &  0.0053 && 1.188 && -0.0019 && 1.982 &&  0.0004 && 1.864   \\
          &  & WI &  0.0024 && 1.000 &&  0.0010 && 1.000 &&  0.0008 && 1.000   \\
   T & 0.1 &  LME &  0.0016 && 0.613 &&  0.0001 && 0.650 && -0.0007 && 0.631   \\
         &  & PQR &  0.0004 && 1.031 && -0.0007 && 1.052 && -0.0001 && 1.046   \\
          &  & WI &  0.0005 && 1.000 && -0.0009 && 1.000 &&  0.0001 && 1.000   \\
     & 0.5 &  LME &  0.0008 && 0.583 && -0.0002 && 0.849 && -0.0009 && 0.831   \\
         &  & PQR &  0.0006 && 1.122 &&  0.0003 && 1.234 &&  0.0004 && 1.301   \\
          &  & WI &  0.0005 && 1.000 &&  0.0009 && 1.000 &&  0.0001 && 1.000   \\
     & 0.9 &  LME &  0.0034 && 0.591 && -0.0012 && 2.863 &&  0.0006 && 3.338   \\
         &  & PQR &  0.0020 && 1.223 && -0.0003 && 2.916 &&  0.0002 && 2.740   \\
          &  & WI &  0.0031 && 1.000 && -0.0015 && 1.000 &&  0.0010 && 1.000   \\
\bottomrule
\end{tabular*}
\end{table}

As we have expected, quantile (Median) regression outperforms mean regression when the random error distribution is skewed or heavy-tailed. When the error term follows a normal distribution, the LME and the proposed quantile method have comparable bias, but the LME is more efficient than the median regression according to the average of the estimated efficiencies of the three $\beta_{\tau}$-parameters. However, when the error follows chi-square distribution ($\chi_{2}^2$) or student's t distribution ($T_{3}$), the LME performs worse than our proposed median regression method, particularly in estimating the intercept parameter $\beta_{0\tau}$. The median regression model is more robust to model mis-specification, while LME can only provide misleading results in those cases.

\section{A real data example}
\label{sec5}
\begin{figure}
\begin{center}
\includegraphics [height=90mm, width=150mm]{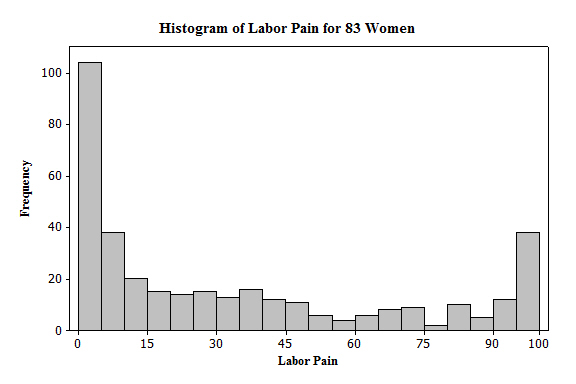}
\caption[Histogram of Measured Labor Pain]{Histogram of measured labor pain for all 83 women.}
\label{histpain}
\end{center}
\end{figure}

In this section, we illustrate the proposed method for quantile regression by analyzing the labor pain data, reported by \citet{Davis1991} and successfully analyzed by \citet{Jung1996}. The data set arose from a randomized clinical trial on the effectiveness of a medication for labor pain relief. A total of $m=83$ women were randomly assigned to either a pain medication group ($43$ women) or a placebo group ($40$ women). The response is a self-reported measure of pain measured every $30$ minutes on a $100$-mm line, where $0=$ ``no pain" and $100=$ ``extreme unbearable pain". The maximum number of measures for each women was $6$, but at later measurement times there are numerous values missing with a nearly monotone pattern. In Figure \ref{histpain}, a histogram of all the pains shows that the data is severely skewed. Therefore mean regression may not be appropriate. In Figure \ref{boxpain}, a box-plot shows the mean and median of the pain over time for all $83$ women and those in two different groups. Statistical dependence on the temporal course of the quartiles of the response is evident to some extent, especially for the placebo group.

\begin{figure}[t!]
\begin{center}
\includegraphics [height=90mm, width=150mm]{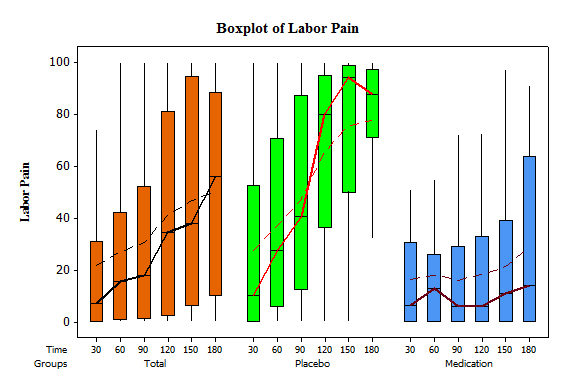}
\caption[Box-plot of Measured Labor Pain]{Box-plot of measured labor pain for all women in placebo and medication groups. The thick solid lines represent the median, while the means are connected with thin dashed lines.}
\label{boxpain}
\end{center}
\end{figure}

Let $y_{ij}$ be the amount of pain for the $i$th patient at time $j$,  $R_{i}$ be the treatment indicator taking $0$ for placebo and $1$ for medication, and $T_{ij}$ be the measurement time divided by $30$ minutes. \citet{Jung1996} considered the median regression model
\begin{equation}
y_{ij}=\beta_{0}+\beta_{1}R_{i}+\beta_{2}T_{ij}+\beta_{3}R_{i}T_{ij}+\varepsilon_{ij},
\end{equation}
where $\varepsilon_{ij}$ is a zero-median error term. Note that $(\beta_{0}+\beta_{1})+(\beta_{2}+\beta_{3})T_{ij}$ is the median for the treatment group and the median for the placebo group is $\beta_{0}+\beta_{2}T_{ij}$.

\begin{table}
\caption[Estimated Results of Labor Pain Data]{Estimated parameters ($\widehat{\beta}$), their standard errors (SE) and corresponding $95\%$ confidence intervals (CI) from fitting both the proposed quantile regression model and usual quantile regression assuming working independence at three quartiles, $\tau= 0.25, 0.5, 0.75$.}
\label{Pod_WI_LP}
\begin{tabular*}{\textwidth}{@{\extracolsep{\fill}}llccrcr@{, }rcrcr@{, }r@{}}
\toprule
&& && \multicolumn{4}{c}{Proposed Method} & &\multicolumn{4}{c}{WI} \\
\cmidrule{5-8}\cmidrule{10-13}
$\tau$ && $\beta$ && $\widehat{\beta}$ & SE & \multicolumn{2}{c}{CI} & & $\widehat{\beta}$ & SE & \multicolumn{2}{c}{CI} \\ \midrule
0.25 && $\beta_{0}$ && -10.32 & 0.42 & (-11.13 & -9.50)   & &  -10.83 &  2.20 & (-15.14 & -6.52) \\
     && $\beta_{1}$ &&   9.08 & 0.42 &   (8.27 & 9.90)    & &   10.83 &  2.20 & (6.51   & 15.15) \\
     && $\beta_{2}$ &&  17.72 & 0.41 & (16.92  & 18.51)   & &   10.83 &  2.20 & (6.52   & 15.14) \\
     && $\beta_{3}$ && -15.58 & 0.41 & (-16.38 & -14.79)  & &  -10.83 &  2.20 & (-15.15 & -6.51) \\
0.5  && $\beta_{0}$ && -10.44 & 1.54 & (-13.45 & -7.43)   & &   -6.20 &  7.95 & (-21.77 & 9.37) \\
     && $\beta_{1}$ &&   8.96 & 1.54 & (5.95   & 11.97)   & &   12.20 &  8.88 & (-5.21  & 29.61) \\
     && $\beta_{2}$ &&  21.05 & 1.27 & (18.56  & 23.53)   & &   17.20 &  2.35 & (12.60  & 21.80) \\
     && $\beta_{3}$ && -12.25 & 1.27 & (-14.74 & -9.77)   & &  -16.20 &  2.72 & (-21.53 & -10.87) \\
0.75 && $\beta_{0}$ &&   1.02 & 4.08 & (-6.97  & 9.02)    & &   58.67 & 14.83 & (29.60  & 87.74) \\
     && $\beta_{1}$ &&  20.42 & 4.08 & (12.43  & 28.42)   & &  -42.67 & 16.30 & (-74.61 & -10.72) \\
     && $\beta_{2}$ &&  22.84 & 0.68 & (21.51  & 24.17)   & &    7.67 &  3.44 & (0.93   & 14.40) \\
     && $\beta_{3}$ && -10.46 & 0.68 & (-11.79 & -9.13)   & &   -2.67 &  4.02 & (-10.54 & 5.21) \\
\bottomrule
\end{tabular*}
\end{table}

Our proposed quantile regression model was fit for three quartiles, $\tau=$ $0.25, 0.5$ and $0.75$, respectively. We report the estimated parameters ($\widehat{\beta}$), their asymptotic standard errors ($SE$) and the $95\%$ confidence intervals ($CI$) in Table \ref{Pod_WI_LP}. Here we also list the results of the usual quantile regression method assuming working independence for comparison. At the $0.25$th quantile, we see that our proposed method gives smaller standard errors, although these two methods produce comparable estimates of parameters. Note that all parameter estimates are significant at $5\%$ level, meaning that each covariate has effect on the $25\%$ quantile labor pain. Parameter estimates to the median regression methods have similar properties, except that the usual quantile regression method assuming working independence gives insignificant estimates of $\beta_{0}$ and $\beta_{1}$, indicating similar baseline pain among two groups. While, for the third quartile ($0.75$th quantile), our proposed method and the WI method  have very different parameter estimates with the proposed method giving much smaller standard errors of the estimates. The insignificant $\beta_{3}$ in WI method indicates similar time effects on the amount of pain in groups of placebo and medication, which contradicts our medical knowledge, while the significance of $\beta_{3}$ in our proposed method provides a perfect interpretation.

To investigate how treatment and time affect the amount of labor pain at three quartiles ($0.25, 0.5, 0.75$), we use our proposed method to compare the estimated values of $\beta_{0}$ with $\beta_{0}+\beta_{1}$ and $\beta_{2}$ with $\beta_{2}+\beta_{3}$ at each quartile, respectively. The result is visualized in Figure \ref{ETPain}. In Figure \ref{ETPain}, we can easily see that medication treatment do help women relieve their labor pain, and the pain of women in the placebo group grows faster with time than that in the treatment group. Moreover, the amount of pain tends to grow slightly faster at higher quantiles than that at lower quantiles. These conclusions are consistent with the box plots shown in Figure \ref{boxpain} and results in \citet{Jung1996} and \citet{LengZhang2012}.

\begin{figure}[t!]
\begin{center}
\includegraphics [height=90mm, width=150mm]{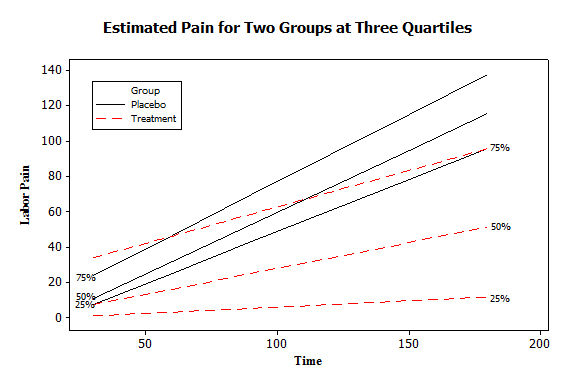}
\caption[Estimated Labor Pain Using Quantile Regression]{Labor pain obtained by using proposed quantile regression method at three quartiles 25\%, 50\% and 75\%.}
\label{ETPain}
\end{center}
\end{figure}

\section{Conclusion}
\label{sec6}
In this paper, we have proposed a new quantile regression model for longitudinal data, incorporating the correlations between repeated measures. We applied a general stationary auto-correlation structure to the estimating equations. To reduce the computational burden caused by the non-continuous estimating functions, we have employed the induced smoothing method of \citet{FuWang2012} for quantile regression. The estimates of the regression parameters and their covariance matrix are then obtained using Newton-Raphson iteration technique. It can be seen that our proposed method is a simple and efficient way to account for within-subject correlations in quantile regression for longitudinal data. This approach drew the inferential methods of quantile regression and the classical mean regression much closer. It reveals that the techniques in GEE's are applicable in quantile regression modeling. Our simulation studies indicate that the proposed method performs better than other methods assuming working independence especially when the within correlation is high. Furthermore, a comparison is also made between the proposed median regression estimator and the corresponding mean regression estimator, where the former is found to be better in analyzing heavy-tailed or skewed data. Finally, the proposed quantile regression estimator is applied to a real data set where the labor pain of two groups of women are reported, which reveals how treatment and time affect the amount of labor pain at three quartiles.

We were trying to take the within-subject correlation into consideration of quantile regression modeling, while the effects of unobserved covariates which may be different from individual to individual have not been captured. For instance, in our real data application, the personal perception of labor pain may vary from one to another. Therefore, like what has been done in \citet{Koenker2004}, we may extend our proposed model to a penalized version allowing individual specific effects by adding subject specific parameters and a penalty term. Further developments of our proposed method include extending quantile regression to well studied research areas in mean regression for longitudinal data such as mixed models for count and binary data \citep{Sutradhar2011}, nonlinear models \citep{HeFuFung2003}, semi-parametric models \citep{LinCarroll2006}, and nonparametric models \citep{WuZhang2006,QuLi2006}. Further results will be reported in forthcoming papers.

\renewcommand{\thesection}{Acknowledgement}
\section{}\label{Acknowledge}
\addcontentsline{toc}{section}{Acknowledgement}

The research of the authors was partially supported by a Discovery Grant from the Natural Sciences and Engineering Research Council of Canada.

\bibliographystyle{elsarticle-harv}
\bibliography{refs}

\renewcommand{\thesection}{Appendix}
\section{}\label{Appdix}
\addcontentsline{toc}{section}{Appendix}
\appendix

In the appendix we give a set of regularity conditions and outline the proofs of the theorems in Section \ref{sec3}.
\begin{enumerate}
\renewcommand{\theenumi}{A\arabic{enumi}}
\item\label{itm:A1} For each $i$, the number of repeated measures $n_{i}$ is bounded and the dimension $p$ of covariates $x_{ij}$ is fixed. The cumulative distribution functions $F_{ij}(z)=P(y_{ij}-x_{ij}^{T}\beta_{\tau}\leq z|x_{ij})$ are absolutely continuous, with continuous densities $f_{ij}$ and its first derivative being uniformly bounded away from 0 and $\infty$ at the point 0, $i=1,\dots,m$; $j=1,\dots,n_{i}$.
\item\label{itm:A2} The true value $\beta_{\tau}$ is an interior point of a bounded convex region $\mathfrak{B}$.
\item\label{itm:A3} Each $x_{i}$ satisfies the following conditions
\begin{enumerate}[(a)]
\item For any positive definite matrix $W_{i}$, $\frac{1}{m}\sum_{i=1}^{m}X_{i}^{T}W_{i}\varGamma_{i}X_{i}$ converges to a positive definite matrix; where $\varGamma_{i}$ is an $n_{i}\times n_{i}$ diagonal matrix with the $j$th diagonal element $f_{ij}(0)$.
\item $\sup_{i}\| x_{i}\|<+\infty$, where $\|\cdot\|$ denotes the Euclidean norm.
\end{enumerate}
\item\label{itm:A4} Matrix $\varOmega$ is positive definite and $\varOmega=O(\frac{1}{m})$.
\item\label{itm:A5} The differentiation of negative $\tilde{U}(\beta_{\tau})$, $-\partial \tilde{U}(\beta_{\tau})/\partial\beta_{\tau}$, is positive definite with probability 1.
\end{enumerate}

\begin{proof}[Proof of Theorem \ref{thm1}]
Let $H_{i}^{T}=X_{i}^{T}\varGamma_{i}\Sigma_{i}^{-1}(\rho)$ and $\psi_{i}=\psi_{\tau}(y_{i}-X_{i}\hat{\beta}_{\tau})$, therefore $U(\hat{\beta}_{\tau})=\sum_{i=1}^{m}H_{i}^{T}\psi_{i}$. Let $\bar{U}(\hat{\beta}_{\tau})=\sum_{i=1}^{m}H_{i}^{T}\varphi_{i}$, where $\varphi_{i}=(\tau-P(y_{i1}-x_{i1}^{T}\hat{\beta}_{\tau}\leq 0),\dots,\tau-P(y_{in_{i}}-x_{in_{i}}^{T}\hat{\beta}_{\tau}\leq 0))^{T}$. We can obtain
\[
\begin{split}
\frac{1}{m}(U(\hat{\beta}_{\tau})-\bar{U}(\hat{\beta}_{\tau})) &=\frac{1}{m}\sum_{i=1}^{m}H_{i}^{T}(\psi_{i}-\varphi_{i})\\
&=\frac{1}{m}\sum_{i=1}^{m}H_{i}^{T}\begin{pmatrix}
P(y_{i1}-x_{i1}^{T}\hat{\beta}_{\tau}\leq 0)-I(y_{i1}-x_{i1}^{T}\hat{\beta}_{\tau}\leq 0)\\
\vdots\\
P(y_{in_{i}}-x_{in_{i}}^{T}\hat{\beta}_{\tau}\leq 0)-I(y_{in_{i}}-x_{in_{i}}^{T}\hat{\beta}_{\tau}\leq 0)
\end{pmatrix}\\
&=\frac{1}{m}\sum_{i=1}^{m}\sum_{j=1}^{n_{i}}h_{ij}[P(y_{ij}-x_{ij}^{T}\hat{\beta}_{\tau}\leq 0)-I(y_{ij}-x_{ij}^{T}\hat{\beta}_{\tau}\leq 0)],
\end{split}
\]
where $h_{ij}$ is a $p\times 1$ vector and $(h_{i1},\dots,h_{in_{i}})=H_{i}^{T}$. According to the uniform strong law of large numbers \citep{Pollard1990}, under condition \ref{itm:A3} we have
\[
\sup_{\hat{\beta}_{\tau}\in \mathfrak{B}}\Biggl\lvert\frac{1}{m}\sum_{i=1}^{m}\sum_{j=1}^{n_{i}}h_{ij}[P(y_{ij}-x_{ij}^{T}\hat{\beta}_{\tau}\leq 0)-I(y_{ij}-x_{ij}^{T}\hat{\beta}_{\tau}\leq 0)]\Biggr\rvert= o(m^{-1/2}) \quad \text{a.s..}
\]
Therefore,
\[
\sup_{\hat{\beta}_{\tau}\in \mathfrak{B}}\lVert\frac{1}{m}(U(\hat{\beta}_{\tau})-\bar{U}(\hat{\beta}_{\tau}))\rVert= o(m^{-1/2}) \quad \text{a.s..}
\]
Now,
\[
G_{m}(\beta_{\tau})=-\frac{1}{m}\frac{\partial\bar{U}(\hat{\beta}_{\tau})}{\partial\hat{\beta}_{\tau}}\Biggr\rvert_{\hat{\beta}_{\tau}=\beta_{\tau}}=\frac{1}{m}\sum_{i=1}^{m}H_{i}^{T}\varGamma_{i}X_{i}
\]
is positive definite and, with probability 1, $G_{m}(\beta_{\tau})\rightarrow G(\beta_{\tau})$ when $m\rightarrow +\infty$. Because $P(y_{ij}-x_{ij}^{T}\beta_{\tau}\leq 0)=\tau$, $\beta_{\tau}$ is the unique solution of the equation $\bar{U}(\hat{\beta}_{\tau})$. Together with $U(\hat{\beta}_{\tau})=0$ and condition \ref{itm:A3}, implies that $\hat{\beta}_{\tau}\rightarrow \beta_{\tau}$ as $m\rightarrow \infty$.

~~ Because $\psi_{i}$ are independent random variables with mean zero, and $\var\{U(\beta_{\tau})/m\}=\frac{1}{m}\sum_{i=1}^{m}X_{i}^{T}\varGamma_{i}\Sigma_{i}^{-1}(\rho)\cov(\psi_{i})\Sigma_{i}^{-1}(\rho)\varGamma_{i}X_{i}$, the multivariate central limit theorem implies that $\frac{1}{\sqrt{m}}U(\beta_{\tau})\rightarrow N(0,V)$.

~~ For any $\hat{\beta}_{\tau}$ satisfying $\lVert \hat{\beta}_{\tau}-\beta_{\tau}\rVert<cm^{-1/3}$,
\[
\begin{split}
U(\hat{\beta}_{\tau})-U(\beta_{\tau})&=\sum_{i=1}^{m}H_{i}^{T}(\hat{\beta}_{\tau})\psi_{i}(\hat{\beta}_{\tau})-\sum_{i=1}^{m}H_{i}^{T}(\beta_{\tau})\psi_{i}(\beta_{\tau})\\
&=\sum_{i=1}^{m}H_{i}^{T}(\hat{\beta}_{\tau})\{\psi_{i}(\hat{\beta}_{\tau})-\psi_{i}(\beta_{\tau})\}+\sum_{i=1}^{m}\{H_{i}^{T}(\hat{\beta}_{\tau})-H_{i}^{T}(\beta_{\tau})\}^{T}\psi_{i}(\beta_{\tau}).
\end{split}
\]
The first term can be written as
\[
\begin{split}
&\sum_{i=1}^{m}H_{i}^{T}(\hat{\beta}_{\tau})\{\psi_{i}(\hat{\beta}_{\tau})-\psi_{i}(\beta_{\tau})\}\\
& \quad =\sum_{i=1}^{m}H_{i}^{T}(\hat{\beta}_{\tau})\varphi_{i}(\hat{\beta}_{\tau})+\sum_{i=1}^{m}H_{i}^{T}(\hat{\beta}_{\tau})\{\psi_{i}(\hat{\beta}_{\tau})-\psi_{i}(\beta_{\tau})-\varphi_{i}(\hat{\beta}_{\tau})\}\\
& \quad =\sum_{i=1}^{m}H_{i}^{T}(\hat{\beta}_{\tau})\varphi_{i}(\hat{\beta}_{\tau})+\sum_{i=1}^{m}H_{i}^{T}(\hat{\beta}_{\tau})\{P(y_{ij}-x_{ij}^{T}\hat{\beta}_{\tau}\leq 0)-I(y_{ij}-x_{ij}^{T}\hat{\beta}_{\tau}\leq 0)\\
& \quad \quad +I(y_{ij}-x_{ij}^{T}\beta_{\tau}\leq 0)-\tau\}
\end{split}
\]
The Lemma in \citet{Jung1996} tells us that
\[
\begin{split}
&\sup\Biggl\lvert \sum_{i=1}^{m}H_{i}^{T}(\hat{\beta}_{\tau})\{P(y_{ij}-x_{ij}^{T}\hat{\beta}_{\tau}\leq 0)-I(y_{ij}-x_{ij}^{T}\hat{\beta}_{\tau}\leq 0)+I(y_{ij}-x_{ij}^{T}\beta_{\tau}\leq 0)-\tau\} \Biggr\rvert\\
& \quad =o_{p}(\sqrt{m}).
\end{split}
\]
Therefore,
\[
\begin{split}
\sum_{i=1}^{m}H_{i}^{T}(\hat{\beta}_{\tau})\{\psi_{i}(\hat{\beta}_{\tau})-\psi_{i}(\beta_{\tau})\}&=\sum_{i=1}^{m}H_{i}^{T}(\hat{\beta}_{\tau})\varphi_{i}(\hat{\beta}_{\tau})+o_{p}(\sqrt{m})\\
&=\bar{U}(\hat{\beta}_{\tau})+o_{p}(\sqrt{m})
\end{split}
\]
From the law of large numbers \citep{Pollard1990} the second term
\[
\begin{split}
\sum_{i=1}^{m}\{H_{i}^{T}(\hat{\beta}_{\tau})-H_{i}^{T}(\beta_{\tau})\}^{T}\psi_{i}(\beta_{\tau})&= \sum_{i=1}^{m}\sum_{j=1}^{n_{i}}(h_{ij}(\hat{\beta}_{\tau})-h_{ij}(\beta_{\tau}))[P(y_{ij}-x_{ij}^{T}\beta_{\tau}\leq 0)\\
& \quad -I(y_{ij}-x_{ij}^{T}\beta_{\tau}\leq 0)]\\
& =o_{p}(\sqrt{m}).
\end{split}
\]
Hence, $U(\hat{\beta}_{\tau})-U(\beta_{\tau})=\bar{U}(\hat{\beta}_{\tau})+o_{p}(\sqrt{m})$. Using Taylor's expansion of $\bar{U}(\hat{\beta}_{\tau})$, we have
\[
\frac{1}{\sqrt{m}}\{U(\hat{\beta}_{\tau})-U(\beta_{\tau})\}=\frac{1}{m}\frac{\partial\bar{U}(\hat{\beta}_{\tau})}{\partial\hat{\beta}_{\tau}}\Biggr\rvert_{\hat{\beta}_{\tau}=\beta_{\tau}}\sqrt{m}(\hat{\beta}_{\tau}-\beta_{\tau})+o_{p}(1).
\]
Because $\hat{\beta}_{\tau}$ is in the $m^{-1/3}$ neighborhood of $\beta_{\tau}$ and $U(\hat{\beta}_{\tau})=0$, we have
\[
\sqrt{m}(\hat{\beta}_{\tau}-\beta_{\tau})=G_{m}^{-1}(\beta_{\tau})\frac{1}{\sqrt{m}}U(\beta_{\tau})+o_{p}(1).
\]
Therefore $\sqrt{m}(\hat{\beta}_{\tau}-\beta_{\tau})\rightarrow N(0,G^{-1}(\beta_{\tau})V\{G^{-1}(\beta_{\tau})\}^{T})$ as $m\rightarrow +\infty$.
\end{proof}

\begin{proof}[Proof of Lemma \ref{lemma}]
Let $\psi_{ij}=\psi_{\tau}(y_{ij}-x_{ij}^{T}\beta_{\tau})$, $\tilde{\psi}_{ij}=\tilde{\psi}_{\tau}(y_{ij}-x_{ij}^{T}\beta_{\tau})$ and $d_{ij}=\varepsilon_{ij}/r_{ij}$, where $\varepsilon_{ij}=y_{ij}-x_{ij}^{T}\beta_{\tau}$, $r_{ij}=\sqrt{x_{ij}^{T}\varOmega x_{ij}}$. Since $\tilde{\psi}_{ij}-\psi_{ij}=\sgn(-d_{ij})\varPhi(-|d_{ij}|)$, where $\sgn(\cdot)$ is the sign function, we have
\[
\begin{split}
\frac{1}{\sqrt{m}}\{\tilde{U}(\beta_{\tau})-U(\beta_{\tau})\}&=\frac{1}{\sqrt{m}}\sum_{i=1}^{m}X_{i}^{T}\varGamma_{i}\Sigma_{i}^{-1}(\rho)
\begin{pmatrix}
\sgn(-d_{i1})\varPhi(-|d_{i1}|)\\
\vdots \\
\sgn(-d_{in_{i}})\varPhi(-|d_{in_{i}}|)
\end{pmatrix}\\
&=\frac{1}{\sqrt{m}}\sum_{i=1}^{m}\sum_{j=1}^{n_{i}}z_{ij}\sgn(-d_{ij})\varPhi(-|d_{ij}|),
\end{split}
\]
where $z_{ij}$ is the $j$th column of $X_{i}^{T}\varGamma_{i}\Sigma_{i}^{-1}(\rho)$. Because
\[
\begin{split}
E(\tilde{\psi}_{ij}-\psi_{ij})&=\int_{-\infty}^{+\infty}\sgn(-d_{ij})\varPhi(-|d_{ij}|)f_{ij}(\varepsilon)d\varepsilon\\
&=\int_{-\infty}^{+\infty}\varPhi(-|\varepsilon|/r_{ij})\{2I(\varepsilon \leq 0)-1\}f_{ij}(\varepsilon)d\varepsilon\\
&=r_{ij}\int_{-\infty}^{+\infty}\varPhi(-|t|)\{2I(t \leq 0)-1\}[f_{ij}(0)+f_{ij}^{'}(\zeta(t))r_{ij}t]dt,
\end{split}
\]
where $\zeta(t)$ is between 0 and $r_{ij}t$. Because $\int_{-\infty}^{+\infty}\varPhi(-|t|)\{2I(t \leq 0)-1\}dt=0$, we have $r_{ij}\int_{-\infty}^{+\infty}\varPhi(-|t|)\{2I(t \leq 0)-1\}f_{ij}(0)dt=0$. Since $\int_{-\infty}^{+\infty}|t|\varPhi(-|t|)dt=1/2$, and by condition \ref{itm:A1}, there exists a constant $M$ such that $\sup_{ij}|f_{ij}^{'}(\zeta(t))|\leq M$. Therefore,
\[
\begin{split}
|E(\tilde{\psi}_{ij}-\psi_{ij})| & \leq r_{ij}^{2}\int_{-\infty}^{+\infty}|t|\varPhi(-|t|)|f_{ij}^{'}(\zeta(t))|dt\\
& \leq Mr_{ij}^{2}/2.
\end{split}
\]
Under regularity conditions \ref{itm:A3} and \ref{itm:A4}, when $m\rightarrow +\infty$,
\[
\biggl\lVert\frac{1}{\sqrt{m}}E\{\tilde{U}(\beta_{\tau})-U(\beta_{\tau})\}\biggr\rVert \leq \frac{1}{\sqrt{m}}\sup_{i,j}|z_{ij}|\sum_{i=1}^{m}Mr_{ij}^{2}/2=o(1).
\]
Moreover,
\[
\frac{1}{m}\var\{\tilde{U}(\beta_{\tau})-U(\beta_{\tau})\}=\frac{1}{m}\sum_{i=1}^{m}\var\biggl\{\sum_{j=1}^{n_{i}}z_{ij}\sgn(-d_{ij})\varPhi(-|d_{ij}|)\biggr\}.
\]
By Cauchy-Schwartz inequality,
\begin{equation*}
\begin{split}
\frac{1}{m}\var\{\tilde{U}(\beta_{\tau})-U(\beta_{\tau})\} & \leq \frac{1}{m}\sum_{i=1}^{m}\sum_{j=1}^{n_{i}}z_{ij}z_{ij}^{T}\var(\tilde{\psi}_{ij}-\psi_{ij})\\
& +\frac{1}{m}\sum_{i=1}^{m}\sum_{j=1}^{n_{i}}\sum_{k\neq j}^{n_{i}}z_{ij}z_{ik}^{T}\sqrt{\var(\tilde{\psi}_{ij}-\psi_{ij})\var(\tilde{\psi}_{ik}-\psi_{ik})}.
\end{split}
\end{equation*}
Hence for each $j=1,\dots,n_{i}$,
\[
\begin{split}
\var(\tilde{\psi}_{ij}-\psi_{ij}) & \leq E(\tilde{\psi}_{ij}-\psi_{ij})^{2}=\int_{-\infty}^{+\infty}\{\sgn(-d_{ij})\varPhi(-|d_{ij}|)\}^{2}f_{ij}(\varepsilon)d\varepsilon \\
&=r_{ij}\int_{-\infty}^{+\infty}\varPhi^{2}(-|t|)f_{ij}(r_{ij}t)dt\\
&=r_{ij}\int_{|t|>\varDelta}^{}\varPhi^{2}(-|t|)f_{ij}(r_{ij}t)dt+r_{ij}\int_{|t|\leq \varDelta}^{}\varPhi^{2}(-|t|)f_{ij}(r_{ij}t)dt\\
&\leq \varPhi^{2}(-\varDelta)+r_{ij}\varDelta f_{ij}(\zeta),
\end{split}
\]
where $\varDelta$ is a positive value, and $\zeta$ is in the interval $(-r_{ij}\varDelta,r_{ij}\varDelta)$. Let $\varDelta=m^{1/3}$. Under condition \ref{itm:A4}, because $r_{ij}=O(m^{-1/2})$, we have $r_{ij}\varDelta=O(m^{-1/6})$. Moreover, both $\varPhi^{2}(-\varDelta)$ and $r_{ij}\varDelta f_{ij}(\zeta)$ converges to 0 as $m\rightarrow +\infty$. By conditions \ref{itm:A2} and \ref{itm:A3}, it can be easily obtained that $\frac{1}{m}\var\{\tilde{U}(\beta_{\tau})-U(\beta_{\tau})\}=o(1)$. Therefore, for any $\beta_{\tau}$, we have $\frac{1}{\sqrt{m}}\{\tilde{U}(\beta_{\tau})-U(\beta_{\tau})\}\rightarrow 0$ as $m\rightarrow +\infty$.
\end{proof}

\begin{proof}[Proof of Theorem \ref{thm2}]
From the results in Theorem \ref{thm1} along with $\sup_{\hat{\beta}_{\tau}\in \mathfrak{B}}\|m^{-1}\{U(\hat{\beta}_{\tau})-\bar{U}(\hat{\beta}_{\tau})\}\|=o(m^{-1/2})$ a.s., and by the triangle inequality, we have $\sup_{\hat{\beta}_{\tau}\in \mathfrak{B}}\|m^{-1}\{\tilde{U}(\hat{\beta}_{\tau})-\bar{U}(\hat{\beta}_{\tau})\}\|=o(m^{-1/2})$. If we denote $\beta_{\tau}$ as the unique solution of equation $\bar{U}(\hat{\beta}_{\tau})=0$ and $\tilde{\beta}_{\tau}$ solving $\tilde{U}(\hat{\beta}_{\tau})=0$, we can obtain that $\tilde{\beta}_{\tau}\rightarrow\beta_{\tau}$ as $m\rightarrow+\infty$.

~~ Before proving the asymptotic normality of $\tilde{\beta}_{\tau}$, we first prove that $m^{-1}\{\tilde{G}(\beta_{\tau})-G(\beta_{\tau})\}\xrightarrow{p} 0$, where $\tilde{G}(\beta_{\tau})=-\partial\tilde{U}(\beta_{\tau})/\partial\beta_{\tau}=\sum_{i=1}^{m}X_{i}^{T}\varGamma_{i}\Sigma_{i}^{-1}(\rho)\tilde{\varLambda}_{i}X_{i}$. If we denote  $H_{i}^{T}=X_{i}^{T}\varGamma_{i}\Sigma_{i}^{-1}(\rho)=(h_{i1},\dots,h_{in_{i}})$, where $h_{ij}$ is a $p\times 1$ vector, we can obtain that
\[
E\{\tilde{G}(\beta_{\tau})\}-G(\beta_{\tau})=\sum_{i=1}^{m}\sum_{j=1}^{n_{i}}h_{ij}\biggl\{\frac{1}{r_{ij}}E\phi\biggl(\frac{\varepsilon_{ij}}{r_{ij}}\biggr)-f_{ij}(0)\biggr\}x_{ij}.
\]
Because
\[
\begin{split}
\biggl|\frac{1}{r_{ij}}E\phi\biggl(\frac{\varepsilon_{ij}}{r_{ij}}\biggr)-f_{ij}(0)\biggr| &=\biggl|\frac{1}{r_{ij}}\int_{-\infty}^{+\infty}\phi\biggl(\frac{\varepsilon}{r_{ij}}\biggr)f_{ij}(\varepsilon)d\varepsilon-f_{ij}(0)\biggr| \\
&=\biggl|\int_{-\infty}^{+\infty}\phi(t)\{f_{ij}(0)+r_{ij}tf_{ij}(\xi_{t})\}dt-f_{ij}(0)\biggr| \\
&=\biggl|r_{ij}\int_{-\infty}^{+\infty}\phi(t)tf_{ij}(\xi_{t})dt\biggr| \\
&\leq r_{ij}\int_{-\infty}^{+\infty}|\phi(t)tf_{ij}(\xi_{t})|dt,
\end{split}
\]
where $\xi_{t}$ lies between 0 and $r_{ij}t$. By condition \ref{itm:A1}, there exists a constant $M$ such that $f_{ij}(\xi_{t})\leq M$. Furthermore, according to condition \ref{itm:A4}, we have
\[
\biggl|\frac{1}{r_{ij}}E\phi\biggl(\frac{\varepsilon_{ij}}{r_{ij}}\biggr)-f_{ij}(0)\biggr|\leq \sqrt{\frac{2}{\pi}}r_{ij}M\rightarrow 0.
\]
By the strong law of large numbers, we know that $m^{-1}\tilde{G}(\beta_{\tau})\rightarrow E\{m^{-1}\tilde{G}(\beta_{\tau})\}$. Using the triangle inequality, we have
\[
|m^{-1}\{\tilde{G}(\beta_{\tau})-G(\beta_{\tau})\}|\leq |m^{-1}\{\tilde{G}(\beta_{\tau})-E\tilde{G}(\beta_{\tau})\}|+|m^{-1}\{E\tilde{G}(\beta_{\tau})-G(\beta_{\tau})\}|\rightarrow o(1),
\]
which is equivalent to $m^{-1}\{\tilde{G}(\beta_{\tau})-G(\beta_{\tau})\}\xrightarrow{p} 0$.

~~ By Taylor series expansion of $\tilde{U}(\hat{\beta}_{\tau})$ around $\beta_{\tau}$ gives us
\[
\tilde{U}(\hat{\beta}_{\tau})=\tilde{U}(\beta_{\tau})-\tilde{G}(\hat{\beta}_{\tau}^{*})(\hat{\beta}_{\tau}-\beta_{\tau}),
\]
where $\hat{\beta}_{\tau}^{*}$ lies between $\hat{\beta}_{\tau}$ and $\beta_{\tau}$. Let $\hat{\beta}_{\tau}=\tilde{\beta}_{\tau}$. Because $\tilde{U}(\tilde{\beta}_{\tau})=0$ and $\tilde{\beta}_{\tau}\rightarrow \beta_{\tau}$, we therefore obtain $\hat{\beta}_{\tau}^{*}\rightarrow \beta_{\tau}$ and $\tilde{G}(\hat{\beta}_{\tau}^{*})\rightarrow \tilde{G}(\beta_{\tau})$. By Lemma \ref{lemma} and $m^{-1}\{\tilde{G}(\beta_{\tau})-G(\beta_{\tau})\}\xrightarrow{p} 0$, we thus have
\[
\sqrt{m}(\tilde{\beta}_{\tau}-\beta_{\tau})=G_{m}^{-1}(\beta_{\tau})\frac{1}{\sqrt{m}}U(\beta_{\tau})+o_{p}(1).
\]
Therefore $\sqrt{m}(\tilde{\beta}_{\tau}-\beta_{\tau})\rightarrow N(0,G^{-1}(\beta_{\tau})V\{G^{-1}(\beta_{\tau})\}^{T})$ as $m\rightarrow +\infty$.
\end{proof}

\end{document}